\title{Completing the Functional Approach in Object-Oriented Languages}
 \author{Martin Pl\"umicke
 \institute{Duale Hochschule Baden-Württemberg, \\
   Stuttgart, Campus Horb\\
Department of Computer Science,\\
Florianstra{\ss}e 15,\\
D--72160 Horb, Germany}
 \email{pl@dhbw.de}
}
\newcommand{\titlerunning}{Completing the Functional Approach in OO Languages}
\newcommand{\authorrunning}{Martin Pl\"umicke}
\begin{document}
\maketitle

\begin{abstract}
Over the last two decades practically all object-oriented programming
languages have introduced
features that are well-known from functional programming
languages. But many features that were introduced were fragmentary. In \javatx we address
the latter features and propose a completion.
\javatx (i.e. \textbf{T}ype e\textbf{X}tended) is a language based on \java. The predominant new features are global type inference and real function types for lambda expressions. Global type inference means that all type annotations can be omitted, and the compiler infers them without losing the static type property.

We introduce the function types in a similar fashion as in \scala but
additionally integrated them into the \java target-typing as proposed in the so-called strawman approach.

In this paper, we provide an integrated presentation of all \javatx
features. The focus is therby on the automatic inference of type parameters for
classes and their methods, and on the heterogeneous translation of function
types, which permits
the preservation of the argument and return types in bytecode.
\end{abstract}

\section{Introduction}
In this paper we will complete the introduction of functional features to
object-oriented languages employing the example language \java. The main challenge of
this approach is addressing the fundamental differences in both language families. While
functional programming languages only allow immutable data elements,
object-oriented languages contain mutable states. A direct transfer would lead
to unsound languages. Therefore in \java, as in
all other object-oriented languages, the functional features are not introduced
completely. Some features are even slashed.

We closed this gap that spanned  the last 20 years by designing our language \javatx, which
extends the functional features to mutable states. We completed the
  feature \textit{type inference} by extending the local type inference to
  global type inference and we have been completed the concept of \emph{lambda
  expressions} such that functions are treated as first-class
  citizens.

Let us consider the features step by step:
Since version 1.5, the programming language \java has been extended by
incorporating many features from functional programming languages. 
Version 1.5 saw the introduction of generics. Generics are known as parametric polymorphism
in functional programming languages. For parametric polymorphism, the main
challenge was to realise convariance and contravariance, respectively. Whereas in \pizza
(the first approach to the introduction of  functional features to \java \cite{OW97,ORW00})
only invariance was allowed, in \javafive the so-called wildcards were
introduced (a restricted form of existential types) \cite{TEHABG04}
which allow use-side variance, which however then created another problem during method
calling. This was solved  by the capture conversion where each appearance of
a wildcard is substituted by a fresh type variable. By this method, then, the
challenge  for parametric
polymorphism in \java was solved and the features have been completely introduced into
\java. 

In contrast, type inference was only  introduced as local type inference \cite{PT00,OZZ01}
which menas that only some type annotations can be left out. In particular, no
type inference  is possible for recursive lambda expressions and method headers
in method declarations.
Local type inference \cite{PT00,OZZ01} was introduced in the versions 5, 7, and 10. In Java 5, the
automatic determination of parameter instance was introduced. In Java 7, the
diamond operator was introduced. In \javaten, finally, the var keyword for types of local
variables was introduced.

Up until now the feature of global type inference (no type declarations are
necessary, including method arguments and return types, without losing static
typing property) has not been addressed in any object-oriented language.
We solved this problem in \javatx.

Lambda--expressions were introduced in \javaeight. In the so-called project
\emph{Lambda} of Oracle, an extensive discussion over several years yielded the
result. Although, lambda expressions were indeed introduced in \javaeight
function types were not. Instead, there were functional interfaces as target types of
lambda expressions. Many disadvantages can be found due to the lack of function
types \cite{plue17_2}. \javatx counters these disadvantages by introducing function
types in a similar way as in Scala without losing the convenience of functional
interfaces as target type of lambda expressions \cite{plue17_2}. \javatx implements
the so-called strawman approach, which was theoretically given in
\cite{MR09,lambda-spec}. 

The language \javatx corresponds to \java in version 8. Apart from some trivia, we reduced the
language currently by two essential features, exceptions and generics bound by non type-variable 
types (only type variables as bounds are allowed). Furthermore, basic types
(\texttt{int}, \texttt{float}, \texttt{bool}, \ldots) were left out such that the boxed
variants had to be used. But the literals \texttt{1}, \texttt{2}, \texttt{3},
\ldots, \texttt{true}, \texttt{false} were still allowed.
All type annotations were optional.
They could be inferred by our type
inference algorithm (cp. Section \ref{sec:globaltypeinference}).  

The \javatx type system corresponds substantially to the original \javaeight
type system as given in \cite{GoJoStBrBu14}. We have extended the type system
through the introduction of real function types \cite{plue17_2}.



The most significant contribution of this paper lies in the integrated \javatx
presentation of all features 
which complete the functional approach in OO-languages. Furthermore, with this
work we
contribute the third step of the type inference algorithm, the
generated generics (Section \ref{sec:generatedGenerics}) and the heterogeneous
translation of the function types (Section
\ref{sec:heterogenousTranslation}). 

The paper is organised as follows: In Section 2, we shall consider the type
inference algorithm with its two steps: constraint generation and
unification. In Section 3 we shall go on to contribute the generalised type variables
(generated generics). Then in 
Section 4, we shall demonstrate the extended overloading mechanism of
\javatx. The concept of function types with heterogeneous translation in
\javatx 
will be considered  in the fifth section. 
Finally,
we shall conclude with an overview of related work, a summary and an outlook.

\section{Global type inference}

\label{sec:globaltypeinference}
Global type inference guarantees that no type annotations are required. As in
functional programming languages like \haskell, the compiler similarly determines a
principal typing, such that \javatx is statically typed as original \java.
Let us first consider the iterative implementation of the factorial function as
a simple example. 

\begin{example} 
\label{exa:faculty}
Let the class {\texttt{Fac}} 
be given:
\begin{lstlisting}[basicstyle=\tt]
import java.lang.Integer;

class Fac {
    getFac(<@\red{n}@>){
        var res = 1;
        var i = 1;
        while(i<=n) {
            res = res * i;
            i++;
        }
        return res; }
}
\end{lstlisting}
This is a simple example, where the return
and the argument type of \langelem{getFac} are left out. The type inference
algorithm has to infer the types.
\end{example}


    

The type inference algorithm consists of three steps: \emph{constraint generation},
\emph{type unification}, and \emph{generate generics}.

In this section, we briefly summarise the first two steps, refering
to \cite{plue15_2,plue09_1,stpl18} for details. One of the key aspects of this
paper is the third step 
\emph{generalized type variables} which
will be considered in the next section.

In the following, we use two symbols for subtyping: One the one hand \olsub
stands for the subtype relation of two types and on the other hand $\lessdot$
means that two types should be unified such that they become subtypes. This
means, $ty \lessdot ty'$ is a type constraint, where $ty$ and $ty'$ are type expressions
consisting of (generic) class types and type variables.

\subsection{Constraint generation}
\label{sec:constraintgeneration}
In a traversing of the code, to each node of the methods’ statements and
expressions is assigned a type. If the corresponding types are left out, a
fresh type variable
is mapped as type placeholder; otherwise, the known type is
mapped.

  During the traversing, a set of type constraints $\set{\ol{ty \lessdot ty'}}$
  is generated. The constraints represent the type conditions as defined in
  the \java specification \cite{GoJoStBrBu14}. 
  For more details, see the function \textbf{TYPE} in \cite{plue15_2}.
  Type variables \texttt{A}, \texttt{B}, \texttt{C}, \ldots are written in
  upper-case letters in typewriter font.

\begin{figure}[htbp]
  \centering
\begin{minipage}[c]{0.55\linewidth}
  \begin{lstlisting}
class Fac {
  <@\tvar{N}@>getFac(<@\tvar{O}@>n) {
    <@\tvar{P}@>res = 1;
    <@\tvar{R}@>i = 1;
     while((i<@::\tvar{R})@> <= (n<@::\tvar{O})@>)<@::\tvar{T}@> {
        (res<@::\tvar{P})@>=((res<@::\tvar{P})@>*(i<@::\tvar{R})@>)<@::\tvar{U}\!\!\!@>;
        (i<@::\tvar{R})@>++;
      }
      return(res<@::\tvar{P})@>;
  }
}
  \end{lstlisting}
\end{minipage}
\ \ 
\begin{minipage}[c]{0.35\linewidth}
$\set{(\tvar{P} \lessdot  \tvar{N}),\\
\hspace*{0.2cm}  
(\tvar{U} \lessdot \tvar{P}),\\
\hspace*{0.2cm}  
(\tvar{O} \lessdot \mathtt{java.lang.Number}),\\
\hspace*{0.2cm}  
(\tvar{R} \lessdot \mathtt{java.lang.Number}), \\
 \hspace*{0.2cm}  
(\mathtt{java.lang.Boolean} \doteq \tvar{T}),\\
\hspace*{0.2cm} 
(\mathtt{java.lang.Integer} \doteq \tvar{U}), \\
 \hspace*{0.2cm}
(\tvar{R} \lessdot \mathtt{java.lang.Integer}), \\
 \hspace*{0.2cm} 
(\tvar{P} \lessdot \mathtt{java.lang.Integer})}$
  \end{minipage}  
  \caption{Result of the constraint generation}
  \label{fig:constraintgeneration}
\end{figure}

  \begin{example} 
The result of the constraint generation of the \langelem{Fac}-example is presented
in Fig.~\ref{fig:constraintgeneration}.
On the left-hand side we present the essential type variables of the \langelem{getFac}--method
which are mapped  to nodes. A type variable \langelem{O} 
assigned to a variable \langelem{n} in the program is written as
"\langelem{(n::}\langelem{O)}". On the right-hand
side, the    
generated constraints are presented.

The search space is restricted by the imported types with their sub- and
  supertypes and packages,
  respectively. In this example \langelem{java.lang.Integer} with its supertypes.
\end{example}

\subsection{Unification}
For the set of type constraints $\set{\ol{ty \lessdot
      ty'}}$, general unifiers (substitutions) $\sigma$
  are demanded, such that $\ol{\exp{\sigma}{ty} \olsub \exp{\sigma}{ty'}}$.
The result of the type unification is a set of pairs
$(\set{\ol{(T \lessdot T')}}, \sigma)$,
where $\set{\ol{(T \lessdot T')}}$ is a set of remaining constraints consisting of
  two type variables and $\sigma$ is a general unifier. 

The type unification algorithm is given in \cite{plue09_1,stpl18}. There
we proved that the unification is indeed not unitary, but finitary,
meaning that there are finitely most general unifiers.

Let us consider the application of the type inference algorithm to the factorial
example (Example \ref{exa:faculty}):
\begin{figure}[htbp]
  \centering
\begin{minipage}[c]{0.35\linewidth}
$\sarray{@{}l@{}l}{
\set{(\emptyset, [& (\tvar{U} \mapsto \mathtt{java.lang.Integer}),\\
& (\tvar{P} \mapsto \mathtt{java.lang.Integer}),\\
&
(\tvar{R} \mapsto \mathtt{java.lang.Integer}),\\
& (\tvar{O} \mapsto \mathtt{java.lang.Integer}),\\
&
(\tvar{N} \mapsto \mathtt{java.lang.Integer}),\\
& (\tvar{T} \mapsto
\mathtt{java.lang.Boolean})]}}$
\end{minipage}
\ \ 
\begin{minipage}[c]{0.60\linewidth}
\begin{lstlisting}
class Fac {
  Integer getFac(Integer n) {
    Integer res = 1;
    Integer i = 1;
    while((i<@::\texttt{Integer})@> <= (n<@::\texttt{Integer})@>)<@::\texttt{Boolean}\,@>{
      (res<@::\texttt{Integer})@> = 
        ((res<@::\texttt{Integer})@> * (i<@::\texttt{Integer})@>)<@::\texttt{Integer}@>;
      (i<@::\texttt{Integer}@>)++;}
    return(res<@::\texttt{Integer})@>;}}
\end{lstlisting}
\end{minipage}
  \caption{Result of the type unification}
  \label{fig:unificationresult}
\end{figure}
\begin{example} 
The result of type unification for the \texttt{Fac}-example is illustrated in
Fig.~\ref{fig:unificationresult} on the left-hand\nolinebreak[2] side.
In this example, no constraints remain that only consist of type variables.
Furthermore, there is only one general unifier.
On the right-hand side, we instantiated the type variables by the determined types.
\end{example}
\section{Generalised type variables}
\label{sec:generatedGenerics}
 In the previous section we considered the unification of type expressions. In this section,
we shall consider the remaining constraints. In the existing type
inference algorithm of functional programming languages without
subtyping (e.g. \haskell or \sml), the remaining type variables are
generalised such that any type can be instantiated if the function
is used.

Following this idea, the remaining type variables become (bound)
type parameters of the class and its methods, respectively, where the
left-hand side of a constraint is a type parameter and the right-hand
side is its bound.
%
%
%
%
Additionally, due to the \java restrictions of type parameters, some type
parameters have to be collected to one new type parameter. 
%
\begin{figure}[htbp]
\begin{tabular}{p{\textwidth/2}@{}p{\textwidth/2}}
\begin{lstlisting}
class TPHsToGenerics<@\footnotemark@> {
    id = x -> x;

    id2 (x) {
        return id.apply(x);}
    
    m(a, b){ return b; }
    
    m2(a, b) {
        var c = m(a,b);
        return a; }
}
\end{lstlisting}\pause
&
\begin{lstlisting}
class TPHsToGenerics {
    <@\red{Fun1\$\$<UD, ETX>\footnotemark}@> id = (<@\red{DZP}@> x) -> x;

    <@\red{ETX}@> id2(<@\red{V}@> x) {
        return id.apply(x);}
    
    <@\red{AI}@> m(<@\red{AM}@> a, <@\red{AN}@> b){ return b; } 
    
    <@\red{AA}@> m2(<@\red{AB}@> a, <@\red{AD}@> b){
        <@\red{AE}@> c = m(a,b);
        return a; }
}
\end{lstlisting}
  \end{tabular}
\vspace*{-0.5cm}
  \caption{Class \texttt{TPHsToGenerics} before and after tree traversing}
  \label{fig:TPHsToGenerics}
\end{figure}
Let us start with a motivating example:

\begin{example}
  \label{exa:FoGG}
  On the left-hand side in Fig.~\ref{fig:TPHsToGenerics}, a \javatx program is given. The identity
  function is mapped to the field \langelem{id}. In the method \langelem{id2} the
  identity function is called. In the method \langelem{m2} the method \langelem{m}
  is called. 

  The application of the traversing step is presented on the right-hand side
  where we leave out inner type variables. The result of the type unification
  is $\set{(cs, [\,])}$ where the remaining set of constraints is:  
{\centering $cs = \sarray{l@{}l}{\set{&\mathtt{UD \lessdot DZP,\ DZP
  \lessdot ETX
},\ \mathtt{V \lessdot UD,\ AN \lessdot AI,\ AB \lessdot AA,\ AB \lessdot AM,\ AD \lessdot AN,\  AI \lessdot
      AE}
  }}$
}
\end{example}
\subsection{Family of generated generics}

\addtocounter{footnote}{-1}
\footnotetext{\texttt{TPHsToGenerics} means transform type placeholders from
  the type inference result to
  generic type variables of \java classes and methods, respectively.}
\addtocounter{footnote}{1}
\footnotetext{$\funtypestartt{\textit{N}}{.}{.}$ is a function type introduced in Section \ref{sec:realfuntiontypes}}

We divide up  the set of remaining constraints $cs$ by transferring it to
bound type variables of the class and each method of the class, respectively.
Thus we build a family of generated generics $FGG$
where the index set is given as the class name and its method names.

\begin{definition}[Family of generated generics]
  \label{def:FoGG}
The \emph{family of  generated generics} is defined as 
$$FGG = (FGG_{in})_{in \in \textit{CLM}},$$
where $\textit{CLM} =\set{cl} 
  \cup \set{m \, | \, m \textrm{ is method in } cl}$ 
is the index set of the class name and its methods' names.

Let $cs$ be a set of remaining constraints as result of the type
unification. $cs$ is transferred  to the family of generated generics $FGG$ where the set of generated generics of
the class $FGG_{cl}$ are given as:  
\begin{itemize}
\item All type variables of the fields with its bounds, including the initialisers,

\item the closure of all bounds of type variables of the fields with its
  bounds, and

\item all unbound type variables of the fields and all unbound
bounds. They get \langelem{Object} as bound.
\end{itemize}
The set of generated generics $FGG_{m}$ of its methods $m$: are given, respectively, as:
\begin{itemize}
\item The type variables of the method \red{$m$} with its bounds, where the bounds are also
  type variables of the method,

\item all type variables of the method m with its bounds, where the
  bounds are type variables of fields, and

\item all unbound type variables of the method
    $\mathtt{\red{m}}$ and all type variables of
    $\mathtt{\red{m}}$ which bounds are not type variables of
    $\mathtt{\red{m}}$ with \langelem{Object} as bound.
\end{itemize}
\end{definition}

\begin{figure}[htbp]
  \centering
\begin{lstlisting}
class TPHsToGenerics <@\red{<UD\ extends}\ \greenA{DZP,\ DZP\ extends}\
\red{ETX>}@> {

    <@\red{Fun1\$\$<UD, ETX>}@> id = x -> x;

    <@\red{<V extends UD> ETX}@> id2(<@\red{V}@> x) { return id.apply(x); }
    
    <@\red{<AM, AN extends AI, AI> AI}@> m(<@\red{AM}@> a, <@\red{AN}@> b){ return b; }
    
    <@\red{<AA, AB extends AA, AD, AE>}@> <@\red{AA}@> m2(<@\red{AB}@> a, <@\red{AD}@> b){
        <@\red{AE}@> c = m(a,b);
        return a;}
}
\end{lstlisting} 
  \caption{Generated generics of the class \texttt{TPHsTo\-Ge\-ner\-ics}}
  \label{fig:FOGG}
\end{figure}
\begin{example}
The family of generated generics for the class \langelem{\texttt{TPHs\-To\-Generics}}
from Example \ref{exa:FoGG} is given as:


\medskip
\textbf{The set of generated generics  $FGG_{\mathtt{TPHsToGenerics}}$ of the class:}
\begin{itemize}
\item Type variables of the fields with its bounds:
   {\red{$\set{\mathtt{UD\,\lessdot\, DZP}}$}}.

\item Closure of all bounds of type variables of the fields with its bounds:
    {\red{$\set{\mathtt{DZP\,\lessdot\,ETX}}$}}.

\item All unbound type variables of the fields and all unbound bounds:
   {\red{$\set{\mathtt{ETX\,\lessdot\,Object}}$}}.
\end{itemize}

\pagebreak[2]
\textbf{The set of generated generics $FGG_{\mathtt{id2}}$:}
\begin{itemize}
\item All type variables of the method m with its bounds, where the
  bounds are type variables of fields
{$\set{\red{\mathtt{V}} \lessdot \red{\mathtt{UD}}}$}.
\end{itemize}
\textbf{The set of generated generics $FGG_{\mathtt{m}}$:}
\begin{itemize}
\item The type variables of the method $\red{m}$ with its bounds,
    where the bounds are also type variables of the method:
{$\set{\red{\mathtt{AN \lessdot AI }}}$}.
\item All unbound type variables of the method $\red{m}$ with
  \langelem{Object} as bound:
 {$\set{\red{\mathtt{AM \lessdot Object,\  AI \lessdot Object}}}$}.
\end{itemize}

\textbf{The set of generated generics $FGG_{\mathtt{m2}}$:}
 \begin{itemize}
   \item The type variables of the method $\red{m}$ with its bounds,
    where the bounds are also type variables of the method:
{$\set{\red{\mathtt{AB \lessdot AA}}}$}.





  \item All unbound type variables of the method
    $\mathtt{\red{m}}$ and all type variables of
    $\mathtt{\red{m}}$ which bounds are not type variables of
    $\mathtt{\red{m}}$ with \langelem{Object} as bound:  
{$\set{\red{\mathtt{AD \lessdot Object, AE \lessdot Object}}}$}
\end{itemize}


The mapping of the family to the class and its methods in the \javatx program is
presented in Fig. \ref{fig:FOGG}, where the bounds \langelem{Object} are left
out.

\end{example}

This is not yet a correct \java-program. In the method \texttt{m2}, the type
\texttt{AD} of the
second parameter \texttt{b} of the method-call of \texttt{m} must be a subtype
of the type \texttt{AE} of the local variable \texttt{c}, as in the method
\texttt{m}, the argument type
\texttt{AN} is a subtype of the return type \texttt{AI}. We address this problem
by extending the family of generated generics to the completed family of
generated generics.

\begin{definition}[Completed family of generated generics]
  \label{def:cFoGG}
  Let $cs$ be the remaining constraints after unification and $FGG$ be the family of generated generics.
The \emph{completed family of  generated generics} $CFGG$ is defined as 
  \begin{itemize}
  \item $CFGG_{cl} = FGG_{cl}$
\item $CFGG_{m}$ corresponds to $FGG_{m}$ where $\tvar{T_i} \lessdot
  \mathtt{Object}$ is substituted by $\tvar{T} \lessdot \tvar{R}$ for each method
  call
  
  {\centering
   $\mathtt{((rcv::cl).m'(\ol{(e::\tvar{ty})})::\tvar{rty})}$\\}

   in the method \red{$m$} of the class {\normalshape\texttt{cl}} with the signature $\oneargtyp{\ol{ty}'}{rty'}$ where
   $\tvar{T} \in \expsf{TVar}{\tvar{ty_i}}$,  $\tvar{T'} 
   \in \expsf{TVar}{ty'_i}$, $\tvar{R'} \in \expsf{TVar}{rty'}$,
   $\tvar{R} \in \expsf{TV}{\tvar{rty}}$, $T \lessdot T' \lessdoteq R' \lessdot
   R$ is in the transitive closure of $cs$, and 
  $\tvar{T'}  \lessdoteq \tvar{R'}$ is in the transitive closure of $CFGG_{m'}$,
 \end{itemize}
\end{definition}

\begin{example}
  \label{exa:comFoGG}
  In the completed family of generated generics of the class
  \langelem{TPHsToGenerics} in the set of generated generics $FGG_\mathtt{m2}$ the
  bound of \langelem{AD} is
  changed from \langelem{Object} to \langelem{AE}:

   {\centering $\set{\red{\mathtt{AD}} \lessdot \red{\mathtt{AE}}}$\\}

   as the method \langelem{m} is called in \langelem{m2}:

    {\centering $\langelem{\tvar{AE} c = }\exptt{m}{a::\tvar{AB},
        b::\tvar{AD}},$\\}

   $\langelem{AD} \lessdoteq \langelem{AN} \lessdoteq \langelem{AI} \lessdoteq
   \langelem{AE}$ is in the transitive closure of $cs$, and
   $\langelem{AN} \lessdoteq \langelem{AI}$ is in the transitive closure of 
  $FGG_\mathtt{m}$.
\end{example}

\begin{theorem}
   The completed family of generated generics of any class fulfills the typing
   rules of a \java class.
 \end{theorem}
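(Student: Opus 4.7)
The plan is to verify the Java typing rules for a class $\mathtt{cl}$ equipped with its completed family of generated generics $CFGG$, rule by rule. The rules split into two groups: well-formedness of the generic parameter declarations (of $\mathtt{cl}$ and of each of its methods), and typability of the class body (field signatures and initialisers, method signatures, and the statements inside each method body).

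For well-formedness, Definition~\ref{def:FoGG} builds $FGG_{cl}$ as the closure under bounds of field type variables, with any remaining unbound variable getting $\mathtt{Object}$. So every bound on a class parameter is either another class parameter or $\mathtt{Object}$. The set $FGG_m$ for each method $m$ similarly contains method type variables whose bounds are either other method type variables of $m$, class type parameters, or $\mathtt{Object}$. The completion step in Definition~\ref{def:cFoGG} only rewrites an $\mathtt{Object}$-bound to $\tvar{T} \lessdot \tvar{R}$ where both $\tvar{T}$ and $\tvar{R}$ already occur as method type variables of the \emph{same} caller $m$; scope and well-formedness are therefore preserved.

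For field initialisers and for non-call expressions inside methods, the unifier $\sigma$ from Section~\ref{sec:globaltypeinference} satisfies all constraints produced by the first step of type inference, and the remaining constraints $cs$ (pairs of type variables only) are exactly what become generic bounds through $FGG$. The \java subtyping conditions at each node are therefore entailed by the generic bounds of the enclosing class or method.

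The main obstacle, and the step I would work hardest on, is the typability of method calls. For a call $\mathtt{((rcv::cl).m'(\ol{(e::\tvar{ty})})::\tvar{rty})}$ with declared signature $\oneargtyp{\ol{ty}'}{rty'}$, \java requires a substitution $\sigma'$ of the method type parameters of $m'$ such that $\ol{ty} \olsub \ol{ty'}[\sigma']$ componentwise and $rty'[\sigma'] \olsub \tvar{rty}$, while respecting the bounds of $m'$ declared by $CFGG_{m'}$. I would construct $\sigma'$ by mapping each formal-parameter variable $\tvar{T'}$ to the value assigned (by any outer instantiation of $m$'s parameters) to the corresponding argument variable $\tvar{T}$, and symmetrically $\tvar{R'}$ to $\tvar{R}$. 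The bound $\tvar{T} \lessdot \tvar{R}$ introduced by completion in $CFGG_m$ guarantees consistency of this choice with the caller's bounds, while $\tvar{T'} \lessdoteq \tvar{R'}$ in the transitive closure of $CFGG_{m'}$ guarantees that $\sigma'$ respects the callee's own bounds. The delicate part is the case of mutually nested or recursive calls, where several chains through different arguments must be honoured simultaneously; I would argue this by induction on the nesting depth of calls, exploiting the fact that completion is taken uniformly over all call sites so that the induced system of bounds is closed under the transitive-closure operations invoked in Definition~\ref{def:cFoGG}.
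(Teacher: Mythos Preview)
Your approach and the paper's differ substantially in structure. The paper does not verify the \java typing rules node by node; instead it invokes two external soundness results (for constraint generation and for type unification, from \cite{plue15_2a} and \cite{plue09_1}), so that the remaining constraint set $cs$ already satisfies the typing rules. It then observes that $FGG$ is a partition of $cs$ and identifies precisely what can be lost in that partition: a chain $\tvar{T} \lessdot \tvar{T'} \lessdoteq \tvar{R'} \lessdot \tvar{R}$ with $\tvar{T},\tvar{R}$ local to a caller $m$ and $\tvar{T'},\tvar{R'}$ belonging to a callee $m'$. The completion step then simply reinserts $\tvar{T} \lessdot \tvar{R}$ into $CFGG_m$, restoring the relationship. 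Your route is the more constructive one: you try to exhibit, at each call site, an explicit \java instantiation $\sigma'$ of the callee's generics and check it against the declared bounds. Both strategies are legitimate; the paper's buys brevity by offloading the heavy lifting to the cited soundness proofs, while yours would give a more self-contained argument if carried out.

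Two points where your plan drifts from what the theorem actually claims. First, the paper explicitly \emph{excludes} from this theorem the question of whether the generated bounds form a legal \java ordering (antisymmetry, single upper bound per variable); that is deferred to the section on \java-conforming binary relations, so your well-formedness paragraph, while correct about scoping, is extra work here. Second, your proposed induction on the nesting depth of calls is fragile in the presence of mutually recursive methods (as in the paper's \texttt{Mutual} example), where there is no finite depth to induct on. The paper avoids this by treating the matter globally: the full constraint set $cs$ was unified before partitioning, and completion is a closure along the call graph rather than an inductive construction. If you wish to keep the explicit-substitution route, replace the depth induction by a fixed-point argument over the call graph. Finally, your sentence about which bound ``guarantees'' what is muddled: the completion-added bound $\tvar{T} \lessdot \tvar{R}$ in the \emph{caller} is precisely what makes your $\sigma'$ respect the \emph{callee's} bound $\tvar{T'} \lessdoteq \tvar{R'}$ after substitution, not the other way around.
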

 \begin{proof}   
   The set of constraints which is generated during the traversing step
   (cp. Section \ref{sec:constraintgeneration}) satisfies the \java typing rules
   \cite{plue15_2a}.

   This set is the input of the type unification
   algorithm. The type unification algorithm has been proven as sound
   \cite{plue09_1}, such that the remaining constraints $cs$ of the output of  the
   type unification fulfill the \java typing rules.

   $FGG$ is a partition of the $cs$. As $FGG$
     assigns each type variable to the class and to the methods, respectively,
     where the type variable is declared and furthermore, as the type variables are unique
     (during the traversing step \emph{fresh} type variables are generated), each
     member of  $FGG$ contains all its type variables and the bounds are
     correct.

     However, there is no guarantee that all relationships of type variables of a method
     are contained in the corresponding member of $FGG$ as in the
     last step of Def.~\ref{def:FoGG} the bounds which are not type variables
     of the method are substituted by \texttt{Object}.

     This means that for type variables \texttt{T} and
     \texttt{R} of a method \texttt{m}  it is possible that there are type constraints $\mathtt{T'}
     \lessdoteq \mathtt{R'}$ in $cs$ with $\mathtt{T} \lessdot \mathtt{T'}
     \lessdoteq \mathtt{R'} \lessdot \mathtt{R}$.

     If such type variables $\mathtt{T'}$ and $\mathtt{R'}$ exist, it means
     that in $\mathtt{m}$ a method $\mathtt{m'}$ is called where $\mathtt{T'}$
     is contained in a type of a parameter of $\mathtt{m}'$ and $\mathtt{R'}$ in
     the result type of $\mathtt{m'}$.

     In $CFGG$ these relationships were added such that $CFGG$ fulifills
     the \java typing rules.
\end{proof}

  \textbf{Remark:} Let us consider two aspects of the proof in more detail. If the method
  $\mathtt{m'}$ is overloaded or overridden all cases are included as for each
  overloading/overriding an individual result set is generated by the unification.

  We have proven that the completed family of generated generics $CFGG$ fulfills the
  \java typing rules, but we have not proven that the binary relation induced
  by the $CFGG$ is a correct \java subtyping ordering. This will be addressed in
  Section \ref{sec:javaconform}.

\medskip
The algorithm which completes the family of generated generics visits all
methods along the call-graph and collects the bounds.

\medskip
Here we offer a further more complex example, where two methods are mutual recursive.
\begin{figure}[htbp]
\begin{minipage}[t]{0.45\linewidth}
    \begin{lstlisting}
class Mutual {



  Pair<BB,DD> m1(B x, C y) {
      D y2 = m2(x, y).snd();
      return new Pair<>(id(x),y2);
  }



  Pair<HH,GG> m2(F x, G y) {
      H x2 = m1(x, y).fst();
      return new Pair<>(x2, id(y));
  }


  I id(J x) { return x; }
}
  \end{lstlisting}
\end{minipage}
\ \ \ \ \ \ \ \  
\begin{minipage}[t]{0.4\linewidth}
   \lstinputlisting{Mutual.java}
\end{minipage}
  \caption{Further example for the completed family of generated generics}
  \label{fig:mutual}
\end{figure}
\begin{example}
  Let the class \langelem{Mutual} in Fig.~\ref{fig:mutual} be given. On the
  left-hand side the types are inferred. Furthermore, the set of remaining constraints is given as
  
  {\centering $cs = \set{\mathtt{B} \lessdot \mathtt{J},\ 
    \mathtt{BB} \lessdot \mathtt{H},\ 
    \mathtt{B} \lessdot \mathtt{F},\ 
    \mathtt{C} \lessdot \mathtt{G},\ 
    \mathtt{GG} \lessdot \mathtt{D},\ 
    \mathtt{F} \lessdot \mathtt{B},\ 
    \mathtt{G} \lessdot \mathtt{C},\ 
    \mathtt{G} \lessdot \mathtt{J},\ 
    \mathtt{I} \lessdot \mathtt{BB},\ 
    \mathtt{I} \lessdot \mathtt{GG},\ 
    \mathtt{J} \lessdot \mathtt{I}
  }$\\}
\bigskip
  The family of generated generics is given as:

  $FGG_\mathtt{m1} =  \set{\mathtt{B} \lessdot \mathtt{Object},\  \mathtt{C} \lessdot
    \mathtt{Object},\  \mathtt{D} \lessdot \mathtt{DD},\  \mathtt{DD} \lessdot
    \mathtt{Object},\  \mathtt{BB} \lessdot
    \mathtt{Object}}$

    $FGG_\mathtt{m2} =  \set{\mathtt{F} \lessdot \mathtt{Object},\  \mathtt{G}
      \lessdot \mathtt{Object},\  \mathtt{H} \lessdot \mathtt{HH},\  \mathtt{HH}
      \lessdot \mathtt{Object},\  \mathtt{GG} \lessdot
    \mathtt{Object}}$

    $FGG_\mathtt{id} =  \set{\mathtt{J} \lessdot \mathtt{I},\  \mathtt{I}
      \lessdot \mathtt{Object}}$

    This leads to the completed family of generated generics given in the
    program on the right side of Fig.~\ref{fig:mutual}.

\end{example}

In the following section, we shall transform the completed family of generated
generics to \java generics of the class and its methods, respectively. 

\subsection{\java-conforming binary relation of type parameters}
\label{sec:javaconform}
The set of remaining constraints as well as each element of the family of
generated generics are arbitrary binary relations. There are two conditions in
\java which all members of the family of generated generics have to fulfill:
\begin{itemize}
\item The reflexive and transitive closure must be a partial ordering (the
  subtyping relation is a partial ordering). 
\item Two different elements have no infimum (multiple inheritance is prohibited).
\end{itemize}

Now in Fig.~\ref{fig:nonconformCFGG} we offer two examples of completed families of
generated generics induced by the type inference which indeed corresponds to the
\java typing rules, but these are not correct \java classes.
\begin{figure}[htbp]
  \begin{minipage}[t]{0.45\linewidth}
      \begin{lstlisting}
class Cycle {
    <L extends M,M extends L> 
    void m(L x, M y) {
        y = x;
        x = y;
    }
}   
  \end{lstlisting}
  \end{minipage}
  \ \ \ \ \ \ \ 
  \begin{minipage}[t]{0.45\linewidth}
     \begin{lstlisting}
class Infimum {
    <A extends B, A extends C, B, C>
    m(A a, B b, C c) {
        b = a;
        c = a;
    }
}
  \end{lstlisting} 
  \end{minipage}
  \caption{Non \java-conform completed families of generated generics}
  \label{fig:nonconformCFGG}
\end{figure}

In the class \texttt{Cycle} the binary relation of the generics is no partial
ordering as it is not antisymmetric (\texttt{L}\nopagebreak \olsub \texttt{M},
\texttt{M} \olsub \texttt{L}, but \texttt{L} $\neq$ \texttt{M}).
In the class \texttt{Infimum} the generics \texttt{B} and \texttt{C} have the
infimum \texttt{A}.

The general approach  is to equalise type variables  by a surjective map $h$
that preserves the subtype relation such that  for $T \lessdoteq T'$ holds true
$\exp{h}{T} \lessdoteq \exp{h}{T'}$.

\bigskip
Let $CFGG$ be the completed family of generated generics of a class. For any
$C \in CFGG$ the cycles and the infima are removed:

First, the cycles are removed. For any ($\tvar{T} \lessdot \tvar{K} \lessdot \tvar{G}
\lessdot ... \lessdot \tvar{T}$) in $C$:

\begin{Listenullparsep}
\item Define for a fresh type variable \tvar{X}: $\exp{h}{cy} := \tvar{X}$ for
  $cy$ in the cycle.
\item Apply $h$ to any element of any member of $CFGG$.
\end{Listenullparsep}

In a second step, the infima are removed:
    Apply the following steps until there are no infima in $C$:

    For any 
$$\textit{Constr}_{\tvar{T}} = \set{(\tvar{T} \lessdot \tvar{R}),\  (\tvar{T} \lessdot \tvar{S}),\  \ldots } \subseteq C$$

\begin{Listenullparsep}
\item Define for a fresh type variable \tvar{X}: $\exp{h}{\tvar{T}} :=
  \tvar{X}$,  $\exp{h}{\tvar{R}} := \tvar{X}$,  $\exp{h}{\tvar{S}} :=
  \tvar{X}$, \ldots
\item Apply $h$ to any element of any member of $CFGG$.
\end{Listenullparsep}

For all remaining type variables $h$ is defined as identity: $\exp{h}{ty} = ty$.

\begin{lemma}
  \label{lemma:javaconform}
  Let $C$ be a member of the completed family of generated generics of a class
  and $h$ the corresponding surjective map. For $\tvar{T} \lessdoteq \tvar{T}'$ holds true
  $\exp{h}{\tvar{T}} \lessdoteq \exp{h}{\tvar{T}'}$. 
\end{lemma}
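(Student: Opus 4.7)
The plan is to prove the statement by induction on the length of the derivation of $\tvar{T} \lessdoteq \tvar{T}'$ in the reflexive-transitive closure of $C$, after first reducing to the case of a single elementary renaming. Indeed, the full map $h$ is a composition of elementary steps---one per detected cycle and one per detected infimum---each of which substitutes a fresh variable $\tvar{X}$ for every member of a specific equivalence class and is then propagated through every constraint of every member of $CFGG$. If each elementary step preserves the subtyping closure, the composition does as well, so it suffices to argue the lemma for a single step $h_0$ and then iterate.

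For a single elementary step $h_0$, the base case of the induction is twofold. Reflexivity ($\tvar{T} = \tvar{T}'$) is immediate, since $\exp{h_0}{\tvar{T}} = \exp{h_0}{\tvar{T}'}$. For an atomic constraint $\tvar{T} \lessdot \tvar{T}'$ of $C$, the prescription to apply $h_0$ to every constraint means that $\exp{h_0}{\tvar{T}} \lessdot \exp{h_0}{\tvar{T}'}$ is literally a constraint of $h_0(C)$; this witnesses $\exp{h_0}{\tvar{T}} \lessdoteq \exp{h_0}{\tvar{T}'}$ directly, and if both endpoints happen to be collapsed to the same fresh $\tvar{X}$ the statement just degenerates to reflexivity on $\tvar{X}$. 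The inductive step, where $\tvar{T} \lessdoteq \tvar{T}'' \lessdoteq \tvar{T}'$ arises by transitivity from two strictly shorter derivations, is closed by applying the induction hypothesis to each factor and using transitivity of $\lessdoteq$ in $h_0(C)$.

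The hard part---which is modest---is just to record that the elementary steps never \emph{discard} a constraint: the instruction ``Apply $h$ to any element of any member of $CFGG$'' means that every original $\tvar{A} \lessdot \tvar{B}$ survives as $\exp{h}{\tvar{A}} \lessdot \exp{h}{\tvar{B}}$ in $h(C)$, possibly becoming a trivially-true $\tvar{X} \lessdot \tvar{X}$ that is absorbed by reflexivity, but never erased. Once this is stated explicitly, no case analysis on whether $\tvar{T}$ or $\tvar{T}'$ lies inside or outside a collapsed class is required, and iterating the single-step lemma along the finite sequence of elementary renamings that constitute $h$ finishes the proof.
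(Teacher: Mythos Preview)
Your argument is correct and rests on the same observation as the paper's proof: since the construction literally replaces every atomic constraint $\tvar{A}\lessdot\tvar{B}$ by $\exp{h}{\tvar{A}}\lessdot\exp{h}{\tvar{B}}$, the image of any $\lessdoteq$-chain is again a $\lessdoteq$-chain, so the reflexive-transitive closure is preserved.

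The paper does this in one stroke, arguing directly with a chain $\tvar{T}\lessdoteq\tvar{Y}_1\lessdoteq\cdots\lessdoteq\tvar{Y}_n\lessdoteq\tvar{T}'$ and observing that its $h$-image is still a chain (with some links possibly degenerating to equalities). You instead add two organizational layers: an outer decomposition of $h$ into elementary cycle- and infimum-collapsing steps, and an inner induction on the length of the $\lessdoteq$-derivation. Both layers are sound but not strictly needed---the chain argument works verbatim for the composite $h$, since the definition applies the full $h$ to every constraint at once rather than step by step. What your presentation buys is explicitness: it makes the ``no constraint is discarded'' point manifest and avoids the paper's slightly awkward phrasing about ``$\exp{h}{\tvar{Y}_i}=\exp{h}{\tvar{Y}_{i+1}}$ for some $i$'', which is not actually a case distinction that the conclusion depends on.
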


\begin{proof}
  For $\tvar{T} \lessdoteq \tvar{T}'$ either $\exp{h}{\tvar{T}} = \exp{h}{\tvar{T}'}$ holds. Then 
  $\exp{h}{\tvar{T}} \lessdot \exp{h}{\tvar{T}'}$ holds true as $\lessdot$ is reflexive.
  In the other case, there are type variables $\tvar{Y}_1 \ldo \tvar{Y}_n$ with  $\tvar{T} \lessdoteq
  \tvar{Y}_1  \lessdoteq \ldots \lessdoteq \tvar{Y}_n \lessdoteq \tvar{T}'$ and $\exp{h}{\tvar{Y}_i} =
  \exp{h}{\tvar{Y}_{i+1}}$ for some $i \in \set{1 \ldo n}$. From this, it follows that
  $\exp{h}{\tvar{T}} \lessdoteq \exp{h}{\tvar{Y}_1}  \lessdoteq \ldots \lessdoteq
  \exp{h}{\tvar{Y}_n} \lessdoteq \exp{h}{\tvar{T}'}$.


\end{proof}

Removing cycles in the class \langelem{Cycle} (Fig.~\ref{fig:nonconformCFGG}) leads to the surjective
mapping $h$ with
  $\exp{h}{\red{\tvar{L}}} = \blue{\tvar{X}}$ and
  $\exp{h}{\red{\tvar{M}}} = \blue{\tvar{X}}$.
Removing infima in the class \langelem{Infimum}
(Fig.~\ref{fig:nonconformCFGG}) leads to the surjective
mapping $h$ with
  $\exp{h}{\red{\tvar{M}}} = \blue{\tvar{X}}$ and
  $\exp{h}{\red{\tvar{N}}} = \blue{\tvar{X}}$.
In Figure \ref{fig:javaconformCFGG} the resulting clases are given.
\begin{figure}[htbp]
  \begin{minipage}[t]{0.45\linewidth}
      \begin{lstlisting}
class Cycle {
    <@\blue{<X>}@> void  m(X x, X y) {
        y = x;
        x = y;
    }
}
  \end{lstlisting}
  \end{minipage}
  \ \ \ \ \ \ \ 
  \begin{minipage}[t]{0.45\linewidth}
     \begin{lstlisting}
class Infimum {
  <@\red{<\blue{X}>}@> void  m(X x, X y, X z) {
        y = x;
        z = x;
    }
}
  \end{lstlisting} 
  \end{minipage}
  \caption{\java-conform completed families of generated generics}
  \label{fig:javaconformCFGG}
\end{figure}

\section{Overloading}
\label{sec:overloading}
The example in Figure \ref{fig:OL} illustrates the extended overloading mechanism of \javatx. 
In the class
\langelem{OL}, the method name \langelem{m} is overloaded by two different method
declarations.
If the types \texttt{Integer},
\texttt{Double}, \texttt{String}, and \texttt{Boolean} are visible, the type of
the first method \langelem{m} is: 
\begin{figure}[htbp]
  \begin{minipage}[t]{0.45\linewidth}
  \begin{lstlisting}
  class OL {
    <@\blue{m(x) \{ return x + x; \}}@>
	
    <@\bluered{m(x) \{ return x || x; \}}@>
  }   
 \end{lstlisting}
  \end{minipage}
  \ \ \ \ \ \ \ 
  \begin{minipage}[t]{0.45\linewidth}
     \begin{lstlisting}	   
  class OLMain {
    <@\red{main(x)}@> { 
      var ol = new OL(); 
      return ol.m(x); } 
  }  
\end{lstlisting}
 \end{minipage}
  \caption{Example für the \javatx overloading mechanism}
  \label{fig:OL}
\end{figure}

\smallskip
{\centering $\sarray{rcl}{
   \blue{\mathtt{OL.m}} & \blue{:}
  & \blue{\mathtt{Integer} \rightarrow \mathtt{Integer}} 
   \ \ \blue{\&} \ \ 
   \blue{\mathtt{Double} \rightarrow \mathtt{Double}} 
    \ \ \blue{\&} \ \ 
    \blue{\mathtt{String} \rightarrow \mathtt{String}},}$\\}

\smallskip
  as \langelem{+} is an overloaded operation.
  The second method \langelem{m} has the type:

\smallskip
{\centering $\sarray{rcl}{
\bluered{\mathtt{OL.m}} & \bluered{:}
   & \bluered{\mathtt{Boolean} \rightarrow \mathtt{Boolean}}.}$\\}

\smallskip
In the class \langelem{OLMain}, an instance of \langelem{OL} is created and on the
instance the overloaded method \langelem{m} is called. 
This means that the method
declaration \langelem{main} is overloaded by the types of both declarations:

\smallskip
{\centering $\sarray{rcl}{
  \red{\mathtt{OLMain.main}} & \red{:}
  & \red{\mathtt{Integer} \rightarrow \mathtt{Integer}} 
   \red{\ \&\ } 
   \red{\mathtt{Double} \rightarrow \mathtt{Double}} 
    \red{\ \&\ } 
    \red{\mathtt{String} \rightarrow \mathtt{String}}\\
   &
   \red{\&} & \red{\mathtt{Boolean} \rightarrow \mathtt{Boolean}.} 
}$\\}

\smallskip
This example shows the extended overloading mechanism in \javatx.
Standard-\java only allows the overloading of method identifiers, 
meaning that multiple method declarations with the same identifier have to be
declared. In contrast, in \javatx, one declaration could be overloaded,
which means that one declaration has different types.

\bigskip

\section{Function types}
\label{sec:realfuntiontypes}
In \javaeight, lambda expression were indeed introduced, but not function types. 
Instead, there are functional interfaces as target
types of lambda expressions. There are many disadvantages due to
of the lack of function types (cp. \cite{plue17_2}). \javatx counters these disadvantages
by introducing function types in a similar way as in \scala without losing
the convenience of functional interfaces as target type of lambda
expressions \cite{plue17_2}. \javatx implements the so-called
strawman approach, which was theoretically given in \cite{lambda-spec, MR09}.
We extended in \javatx two sets of special functional interfaces with
declaration-site variance type parameters:

\begin{minipage}[t]{0.48\linewidth}
\begin{lstlisting}[numbers=none,basicstyle=\tt\footnotesize]]
 interface Fun<@\itshape{N}@>$$<-T1, ..., -T<@\itshape{N}@>,+R> {
   R apply(T1 arg1, ... , T<@\itshape{N}@> arg<@\itshape{N}@>); 
 }
\end{lstlisting}
\end{minipage}
\begin{minipage}[t]{0.48\linewidth}
\begin{lstlisting}[numbers=none,basicstyle=\tt\footnotesize]
   interface FunVoid<@\itshape{N}@>$$<-T1, ..., -T<@\itshape{N}@>> {
     void apply(T1 arg1, ... , T<@\itshape{N}@> arg<@\itshape{N}@>);
   }
\end{lstlisting}
\end{minipage}

where

   \begin{Listenullparsep}
   \item $\funtypestartt{{\itshape N}}{T'_1 ... T'_\mathit{N}}{T_0} \olsub \funtypestartt{{\itshape N}}{T_1  ...
     T_\mathit{N}}{T'_0}$ iff $\mathtt{T_i} \olsub \mathtt{T'_i}$ 
 \item In
   \texttt{Fun{\itshape N}\$\$} no wildcards are allowed.
   \end{Listenullparsep}

  \red{The Lambda-expressions are explicitly typed by
    \texttt{Fun{\itshape N}}\$\$-types}.

\subsection{Heterogenous translation}
\label{sec:heterogenousTranslation}
One of the main arguments against introducing real function types into \java
had been the type erasure during translation to byte-code
\cite{lambda-stateV5}. This means that any function type
$\funtypestartt{{\itshape N}}{T'_1 ... T'_\mathit{N}}{T_0}$ is translated to
\texttt{Fun{\itshape N}\$\$} without argument and without result
types. Therefore, in \javatx, the function types are translated
heterogeneously, meaning that during translation no type erasure (deleting the arguments of
generic types) is carried out (cp \cite{ORW00} for heterogeneous and
homogenous translation).

\medskip

In order to gain an understanding of the problem, let us consider the following example:

\begin{example}
  \label{exa:OLFun}
  Let the following \javatx program be given:
  
{\normalshape
\begin{lstlisting}
  class OLFun {
    m(f) {
      var x;
      x = f.apply(x+x);
      return x;}
   }
  \end{lstlisting}
}
The inferred method headers (cp. Section \ref{sec:overloading}) and the corresponding descriptors in bytecode 
are:
{\normalshape
\begin{lstlisting}[basicstyle=\ttfamily\small]
Double m(Fun1$$<Double,Double>);    <@\red{descriptor:(LFun1\$\$;)Double;}@>
Integer m(Fun1$$<Integer,Integer>); <@\red{descriptor:(LFun1\$\$;)Integer;}@>
String m(Fun1$$<String,String>);    <@\red{descriptor:(LFun1\$\$;)String;}@>
\end{lstlisting}
}
Now method resolving, in \textsf{JVM} done by the descriptors, is no longer possible as all three methods have the
same argument \langelem{Fun1\$\$}.

\end{example}
This problem could be solved by heterogenous translations which preserve the
arguments in the descriptors. 
Therefore, following \cite{ORW00}, we translate a type
  $\funtypestar{\itshape{N}}{ty_1 \ldo ty_n}{ty_0}$ 
  to a string
  
 {\centering  
  \verb;Fun;\texttt{\textit{N}}\verb;$$$_$;$\tilde{ty_1}$\verb;$_$; \ldots
  \verb;$_$;$\tilde{ty_n}$\verb;$_$;$\tilde{ty_0}$\verb;$_$;, \\}
where \emph{$\tilde{ty_i}$} are the translations of the type parameters $ty_i$.
 The translations are subjected to the following substitutions:

{\centering
     \texttt{.} $\mapsto$ \texttt{\$}, \quad\quad\quad
     \texttt{,} $\mapsto$ \texttt{\$\_\$} \quad\quad\quad
     \texttt{<} $\mapsto$ \texttt{\$\_\$}\quad\quad\quad
     \texttt{>} $\mapsto$ \texttt{\$\_\$}
\\}
 
These substitutions are essential, so that the class name satisfies the Java specifications.

\medskip
Differing from \cite{ORW00} we leave the class loader unchanged and 
for each used type $\funtypestar{\textit{N}}{ty_1 \ldo ty_n}{ty_0}$
we implement empty interface-files
%
  \verb;Fun;\texttt{\textit{N}}\verb;$$$_$;$ty_1$\verb;$_$; \ldots
  \verb;$_$;$ty_n$\verb;$_$;$ty_0$\verb;$_$;
%
 that, on the one hand, inherit
  from  
%
  \verb;Fun;\texttt{\textit{N}}\verb;$$$_$;$ty'_1$\verb;$_$; \ldots
  \verb;$_$;$ty'_n$\verb;$_$;$ty'_0$\verb;$_$;
%
if $\funtypestar{\textit{N}}{ty_1 \ldo ty_n}{ty_0}$ is a direct subtype from a
used type $\funtypestar{\textit{N}}{ty'_1 \ldo ty'_n}{ty'_0}$ 
and on the other hand, inherit form
\langelem{Fun1\$\$}. \langelem{Fun1\$\$} is
the original translation from any type $\funtypestar{\textit{N}}{ty_1 \ldo ty_n}{ty_0}$
which contains the \texttt{apply}-method.

  \begin{example}
    For the class \langelem{OLFun} in Example \ref{exa:OLFun} the following
    cutout of the bytecode is generated:
{\normalshape
\begin{lstlisting}[basicstyle=\ttfamily\footnotesize]
Double m(Fun1$$<Double,Double>);<@\red{descriptor:(LFun1\$\$\$\_\$Double\$\_\$Double\$\_\$;)Double;}@>
Integer m(Fun1$$<Integer,Integer>); <@\red{descriptor:(LFun1\$\$\$\_\$Integer\$\_\$Integer\$\_\$;)Integer;}@>
String m(Fun1$$<String,String>); <@\red{descriptor:(LFun1\$\$\$\_\$String\$\_\$String\$\_\$;)String;}@>
\end{lstlisting}
}
The ambiguities thus are resolved.
Note that the prefix of the primitive types
\langelem{(Ljava\$lang\$)} was left out for the sake of readability.
  \end{example}

\section{Related work}
\label{sec:relatedwork}
Some object-oriented languages such as Scala, C\# and Java perform
\emph{local} type inference \cite{PT98,OZZ01}. Local type 
inference means that missing type annotations are recovered only using 
information from adjacent nodes in the syntax tree without long distance
constraints. For instance, the type of a variable initialised with a
non-functional expression or the return type of a method can be
inferred. However, method argument types, in particular for recursive
methods, cannot be inferred by local type inference.

Milner's algorithm $\mathcal{W}$ \cite{Mil78,DM82} is
the gold standard for global type inference for languages with 
parametric polymorphism, which is used by ML-style languages. The fundamental idea
of the algorithm is to enforce type equality by means of many-sorted type
unification \cite{Rob65,MM82}. This approach is effective and results
in so-called principal types because many-sorted unification is
unitary, which means that there is at most one most general result.

The presence of subtyping means that type
unification is no longer unitary, but still finitary. Thus, there is
no longer a single most general type, but any type is an instance of a
finite set of maximal types. We have offered an algorithm for \java type
unification in \cite{plue09_1} and proved its soundness and completeness.

\pizza \cite{OW97} contains real function types with invariant arguments (no
subtyping). Function types similar  to those in \javatx are contained in \scala
\cite{scala34sortendauthorlist}. As shown in \cite{plue17_2}, our approach preserves the properties
of target typing, while \scala do not have this property.

Another feature derived from functional programming languages, which
has been introduced into \java in a restricted version, is pattern
matching. Pattern matching for the \texttt{in\-stance\-of} operator was introduced in
\java \textsf{13-16} \cite{JEP394} for the \texttt{switch-case} instruction
in \java \textsf{17-22} \cite{JEP427} and for the new record classes in \java
\textsf{19} \cite{JEP405}. 
In \pizza \cite{OW97}, pattern matching
is realised via algebraic data types for the switch-case statement.

There are different approaches for formal models of \java \cite{IPW01,TEP05,BBDGV18}. 
%
In \cite{SPT22} we presented an extension of the calculus in \cite{IPW01} with
type inference. This could be considered as a theoretical basis for \javatx.


\section{Summary and outlook}
Within the last 15 years, \java has been developed so as to introduce various concepts 
from functional programming languages.
In this paper we presented an extension of \java, called \javatx.
\javatx continues the range of incorporating functional programming
language features into \java. We added the feature of global
type inference. Global type inference means that \java programs can
be written without any type annotation. \javatx preserves static
typing. 

We presented a concept for generalisation for free type
variables (generated generics) which is more powerful than in functional
programming languages. The remaining type variable constraints
of the type inference were distributed to the class and its method,
respectively.

Subsequently, we showed how global type inference allows the extension of the
overloading mechanism such that not only method identifiers,
but also complete method declaration can be overloaded.


Another extension of \javatx is the introduction of real function
types. 
We introduced Scala-like function
types. For lambda expressions, we defined these function types as
explicit types. At the same time, we preserved the
concept of target typing for functional interfaces as was proposed in the
so-called strawman approach in order to introduce lambda expressions into
\java.
%
%
Furthermore, we achieved an approach for heterogeneous
translation of function types.











In combination with global type inference, an approach of pattern matching
in method headers similar to that of \haskell might be possible.  
We are working on this at the present time.
In summary then, we completed the features \emph{type inference} and
  \emph{lambda expressions} as well as the feature \emph{pattern matching}
  so that they are now as powerful in object-oriented languages as they are in functional programming languages.

\bibliographystyle{eptcs}
\bibliography{martin}
\end{document}